\documentclass[a4paper,UKenglish,cleveref, autoref, thm-restate]{lipics-v2021}

\pdfoutput=1 

\title{Computing Short SAT Implicants via Ising/QUBO Encodings}

\author{Giuseppe Spallitta}{Rice University, Houston, Texas, United States of America}{gs81@rice.edu}{https://orcid.org/0000-0002-4321-4995}{}

\author{Leonardo Duenas-Osorio}{Rice University, Houston, Texas, United States of America}{lad1@rice.edu}{https://orcid.org/0000-0002-7138-7746}{}

\author{Moshe Y. Vardi}{Rice University, Houston, Texas, United States of America}{vardi@rice.edu}{https://orcid.org/0000-0002-0661-5773}{}

\authorrunning{G. Spallitta, L. Duenas-Osorio and M. Y. Vardi} 

\Copyright{G. Spallitta, L. Duenas-Osorio and M. Y. Vardi} 

\ccsdesc[500]{Theory of computation~Logic and verification}
\ccsdesc[500]{Theory of computation~Automated reasoning}
\ccsdesc[300]{Mathematics of computing~Discrete optimization}

\keywords{SAT-to-QUBO, Ising encodings, prime implicants, minimum-cardinality implicant, partial and projected assignments} 

\category{} 

\relatedversion{} 

\supplementdetails[
subcategory={}
]{Software}{https://github.com/giuspek/minimal_assignment_via_Ising} 

\nolinenumbers 

\usepackage{xspace}
\newcommand{\defas}{\ensuremath{\stackrel{\text{\scalebox{.7}{def}}}{=}}\xspace}

\EventEditors{Nicolas Beldiceanu}
\EventNoEds{1}
\EventLongTitle{32nd International Conference on Principles and Practice of Constraint Programming (CP 2026)}
\EventShortTitle{CP 2026}
\EventAcronym{CP}
\EventYear{2026}
\EventDate{July 20--23, 2026}
\EventLocation{Lisbon, Portugal}
\EventLogo{}
\SeriesVolume{379}
\ArticleNo{25}

\begin{document}

\maketitle

\begin{abstract}
Many reasoning tasks require short partial satisfying assignments (implicants), sometimes focusing on a set of important variables. SAT-to-Ising-QUBO formulations are implicitly designed so that ground states correspond to total assignments, since the Ising/QUBO model assigns a value to every spin and has no native representation of unassigned variables.
We introduce an Ising/QUBO framework that incorporates "don’t-care" semantics into the quadratic model via a dual-polarity representation, enabling the retrieval of short implicants. The encoding supports implicant shrinking and projection through minor objective modifications. We provide parameter regimes under which ground states correspond to short partial satisfying assignments, achieving minimality and, when the quadratic penalty function permits, minimum-cardinality. We empirically evaluate the encoding with simulated annealing on random 3-SAT enumeration benchmarks and non-CNF formulas, showing that it leaves about one-third of variables unassigned on random 3-SAT formulas while preserving satisfiability, and that consecutive polarity-freezing rounds achieve minimality (and minimum-cardinality) with high probability.
\end{abstract}

\section{Introduction}

Many reasoning and knowledge-representation tasks require \emph{short partial satisfying assignments} (implicants), where unassigned variables are genuine don’t-cares \cite{ignatiev2016finding}. In many applications, only a subset of variables is semantically relevant, and one seeks short partial assignments \emph{over that subset} while allowing the remaining variables to act as hidden support. Such (projected) partial models are central to explanation and diagnosis~\cite{MarquesSilvaJanota2013}, model compression~\cite{Moehle2025}, and enumeration pipelines that compactly represent large solution spaces via projection~\cite{masina2023cnf,masina2025cnf,masina2025exploiting}. More broadly, the SAT literature has long emphasized reasoning with partial assignments in contexts such as abstraction and logic minimization~\cite{Moehle2025, MarquesSilvaJanota2013}.
Dealing with partiality, however, is not trivial. For simplicity and efficiency, modern SAT solvers naturally compute total satisfying assignments, and obtaining short implicants typically requires repeated SAT calls or dedicated post-processing~\cite{deharbe2013computing}. 

Interestingly, the same phenomenon appears in SAT-to-Ising/QUBO encodings, i.e., translations that map a CNF formula into a quadratic energy function whose ground states correspond to satisfying assignments. SAT and MaxSAT have been extensively studied in the QUBO and Ising frameworks, motivated by the evolution of quantum annealing and classical Ising heuristics~\cite{Lucas2014,bian2017solving,chancellor2016direct,nusslein2023solving,Zielinski2023}. Still, the logical interpretation of the optimizer output typically corresponds to a total Boolean assignment. This is inherent to the Ising setting, where optimization assigns a value to each Ising variable, thereby providing no native representation of ``unassigned'' values. As a result, computing short implicants is not captured within the quadratic model itself and is instead left to external algorithms.

This paper introduces a \emph{unified optimization framework} for computing short partial assignments of CNF formulas in Ising/QUBO form, with native support for implicant shrinking and projection. The key idea is to represent, within the quadratic model, the distinction between variables set to false and unassigned variables, so that a QUBO optimizer's output can be decoded into a partial Boolean assignment. Within this framework, implicant shrinking and projection do not require constructing a distinct optimization model: they are obtained by minor modifications to the original formulation, resulting in a flexible approach to computing partial assignments via quadratic optimization. The encoding could then be used as a drop-in replacement for shrinking subroutines in AllSAT/explanation pipelines \cite{spallitta2025disjoint}.

An important theoretical question is under which weight regimes the framework correctly enforces clause satisfaction, partial-assignment semantics, and minimum-cardinality. We therefore identify conditions ensuring that the implicants decoded from ground states have minimum cardinality, and provide a simple criterion for recognizing implicants from the raw QUBO output energy.

\textit{Remark.} Our experiments are intended as a semantic validation of the encoding: we test whether low-energy configurations decode to short (possibly minimum) partial assignments and whether the expected behavior is visible already under a generic optimization backend.

\paragraph*{Contributions.} 

\begin{enumerate}
    \item A SAT-to-Ising/QUBO encoding that represents three-valued partial assignments (true/false/unassigned) natively via dual-polarity semantics.
    \item QUBO variants where implicant shrinking is expressed by polarity restrictions, and where projection penalizes only a set of variables, keeping the QUBO backbone unchanged. 
    \item A weight-regime analysis establishing conditions under which ground states correspond to a minimum-cardinality implicant, and under which satisfying assignments can be easily detected from just the final energy value. 
    \item An empirical study on random $3$-SAT and non-CNF formulas evaluating the effectiveness of the QUBO encoding to minimal and minimum implicant computation.
\end{enumerate}

\section{Preliminaries and Related Work}
\label{sec:preliminaries}



\subsection{The QUBO/Ising Optimization Problem}
\label{subsec:ising}

From an algorithmic viewpoint, quantum annealers and quantum-inspired heuristics can be regarded as solvers for quadratic optimization over binary variables. In particular, they seek low-energy configurations of an Ising Hamiltonian:
\begin{eqnarray}
\label{eq:ising_optimization}
\arg\min_{\mathbf{z} \in \{-1,1\}^{|V|}} H(\mathbf{z}),
\quad \text{where } \quad
H(\mathbf{z}) \defas
\sum_{i \in V} h_i z_i +
\sum_{(i,j) \in E} J_{i,j} z_i z_j .
\end{eqnarray}
Here each spin $z_i \in \{-1,1\}$ is a binary decision variable, and $G=(V,E)$ is an interaction graph specifying which pairs of spins may couple. The coefficients $h_i$ and $J_{i,j}$ determine the energy landscape and thus which configurations are preferred.

An equivalent representation is the \emph{quadratic unconstrained binary optimization} (QUBO) form, which minimizes a quadratic polynomial over Boolean variables $x_i \in \{0,1\}$:
\[
\arg\min_{\mathbf{x}\in\{0,1\}^{|V|}} \; \mathbf{x}^\top Q\,\mathbf{x},
\]
for some real matrix $Q$ (typically taken upper-triangular, with diagonal terms encoding linear coefficients). The two formulations are interreducible by the affine change of variables $z_i = 2x_i - 1$ (equivalently $x_i = \tfrac{1}{2}(z_i+1))$,
which converts quadratic terms $z_i z_j$ into quadratic terms $x_i x_j$ plus linear and constant offsets; constants can be dropped without affecting minimization. We therefore use Ising and QUBO interchangeably as notationally convenient (see, ~\cite{Lucas2014} for more details).

Since QUBO/Ising minimization is NP-hard, we rely on heuristic optimization algorithms that explore the energy landscape and typically return a low-energy configuration, without a guarantee of optimality. Common classical choices include simulated annealing~\cite{kirkpatrick1983optimization} and tabu/local-search variants~\cite{glover1990tabu}, while quantum annealers can be used as stochastic samplers that generate many candidate configurations~\cite{kadowaki1998quantum}.
These methods may get trapped in local minima; their effectiveness depends on how they diversify the search (e.g., via temperature schedules \cite{kirkpatrick1983optimization}, tabu memory \cite{glover1990tabu2}, or in quantum annealing via reverse annealing \cite{yamashiro2019dynamics}).

In practice, one runs the optimization algorithm for multiple trials (often over a small grid of hyperparameters), collects a pool of candidate configurations, and returns the lowest-energy configuration observed. Incremental refinement can further improve solutions: after each batch of samples, the current best configuration is used to bias or initialize the next run, a strategy shown to be effective for Ising/QUBO optimization~\cite{ding2024effective}.

\subsection{Boolean Formulas, Partial Assignments, and Projection}

Let $F$ be a Boolean formula in conjunctive normal form (CNF) over a finite set of propositional variables $\mathcal{V} = \{A_1, \dots, A_n\}$.
Let $\mathcal{L} = \{A_i, \neg A_i \mid A_i \in \mathcal{V}\}$ denote the set of all literals over $\mathcal{V}$.

A \emph{total assignment} is a function $\eta : \mathcal{V} \to \{\bot,\top\}$.
A \emph{partial assignment} is a partial function $\mu : \mathcal{V'} \to \{\bot,\top\}$, 
with $\mathcal{V'} \subseteq \mathcal{V}$. 
We write $vars(\mu) \subseteq \mathcal{V}$ to indicate the variables assigned by $\mu$.
We write $\mu_1 \supseteq \mu_2$ to denote that $\mu_1$ extends $\mu_2$, i.e., $vars(\mu_1)\supseteq vars(\mu_2)$ and $\mu_1(v)=\mu_2(v)$ for all $v\in vars(\mu_2)$.
Throughout the paper, assignments will be represented either as sets of literals, for instance $\eta = \{A_1, \neg A_2, A_3\}$, or equivalently as conjunctions of literals, e.g.\ $\eta = (A_1 \wedge \neg A_2 \wedge A_3)$.
The size $|\mu|$ is the number of literals fixed by $\mu$.

A partial assignment $\mu$ is an \emph{implicant} of $F$ if it entails $F$ (written $\mu \models F$), meaning each model of $\mu$ is a model of $F$. Equivalently, every total assignment $\eta \supseteq \mu$ satisfies $F$, i.e.,
\begin{equation*}
    \forall \eta \supseteq \mu:\quad \eta \models F.
\end{equation*}
Fix a set of \emph{visible} variables $P\subseteq\mathcal{V}$ and let $H \defas \mathcal{V}\setminus P$ be the hidden variables. A \emph{projected (partial) assignment} $\pi$ is a partial assignment with $vars(\pi)\subseteq P$. Intuitively, $\pi$ is a \emph{projected implicant} if it entails the existential projection of $F$ onto the visible variables, i.e., $\pi \models \exists H.F$. Concretely, this means that after fixing $\pi$, every completion of the remaining visible variables can still be satisfied by a suitable assignment to the hidden variables:
\begin{equation}
\label{eq:projimpl}
\forall \rho:(P\setminus vars(\pi))\to\{\bot,\top\}\ \exists \sigma:H\to\{\bot,\top\}:\quad
(\pi\wedge\rho\wedge\sigma)\models F.
\end{equation}
Let $\pi_\mu$ denote the restriction of the partial assignment $\mu$ to $P$.

\subsection{Minimal and Minimum Partial Assignments}
\label{subsec:min-vs-minimum}

An implicant is \emph{inclusion-minimal} iff no subassignment $\mu' \subset \mu$ is an implicant of $F$ (equivalently, for every literal $\ell \in \mu$, the assignment $\mu \setminus \{\ell\}$ is not an implicant of $F$). In the SAT literature, inclusion-minimal implicants are also called \emph{prime implicants} or just \emph{minimal implicants}. Minimality is a local notion: it rules out redundant literals within $\mu$, but does not compare $\mu$ to implicants using different literals. An implicant $\mu$ is \emph{minimum-cardinality} if it minimizes $|\mu|$ among all implicants of $F$. Minimum-cardinality is a global notion, and implies inclusion-minimality, but not conversely: there may exist many prime implicants of different sizes.

In the projected setting, optimality is measured only over the projected variables $P$, while the hidden variables $H=\mathcal{V}\setminus P$ serve only as auxiliary support. Thus, we measure the size of a projected implicant $\pi$ by $|\pi|$, i.e., the number of fixed literals over $P$. A projected implicant $\pi$ is \emph{projected-prime} if there is no subassignment $\pi' \subset \pi$ that is a projected implicant. A projected implicant $\pi$ is \emph{projected-minimum} if it minimizes $|\pi|$ among all projected implicants of $F$. Projected-minimum implies projected-prime, but the converse does not hold.


In the context of this paper, given a CNF formula, we want to construct a quadratic energy function whose ground states correspond to the shortest partial assignments. Thus, our aim is to characterize weight regimes that guarantee that minimum-cardinality implicants are ground states.

\subsection{Related Work}
\label{sec:related-work}

Encodings of Boolean satisfiability into Ising models and QUBO objectives have been studied for more than a decade, starting from generic penalty-based reductions from NP-complete problems to quadratic optimization~\cite{Lucas2014}. In these approaches, a CNF formula is translated into an energy function whose minimum corresponds to a satisfying assignment, while clause violations are penalized. Direct quadratic encodings for Max~2-SAT and Max~3-SAT were later proposed to avoid higher-order terms and reduce the number of auxiliary variables~\cite{chancellor2016direct}.

More recent work focused on improving SAT-to-Ising/QUBO encodings, for example, by reducing auxiliary variables, controlling coefficient ranges, or exploiting structural properties of the input formula~\cite{Zielinski2023,nusslein2023solving}. Several studies also evaluated SAT-to-Ising on real quantum annealers, analyzing penalty calibration, minor embedding, and robustness to hardware noise and limited precision~\cite{bian2017solving,bian2020solving}, as well as how formula structure and modular encoding choices affect success probability and solution quality, in particular dealing with prime factorization~\cite{ding2024effective,ding2024experimenting}.
Overall, this line of work treats SAT and MaxSAT primarily as optimization problems over quadratic energy functions, where the minimum energy corresponds to an assignment that satisfies the input problem, with no focus on computing partial models or on projection.

Minimum-cardinality implicants of a CNF formula admit a classical covering interpretation: in the plain CNF case, an implicant is a consistent set of literals that hits every clause. Hence, one could formulate the problem as a set-cover/hitting-set instance and then apply an existing QUBO/Ising encoding for set cover \cite{Lucas2014}. This, however, is not the question addressed here.
Our goal is to lift existing SAT-to-Ising encodings from total assignments to partial assignments. Standard SAT-to-Ising encodings assign one Boolean value to each variable, so every spin configuration naturally decodes as a total assignment. By contrast, our polarity-based encoding gives each Boolean variable three semantic states inside the quadratic model: true, false, and unassigned. The unassigned state is essential: it allows the optimizer to search directly over implicants, rather than over total models or over an external coverage abstraction.

This semantic shift changes the role of the energy function. The clause terms still enforce satisfiability at the formula level, while the sparsity term optimizes how many Boolean variables need to be assigned. With sufficiently large clause and consistency penalties, minimizing the energy first enforces valid partial-assignment semantics and clause satisfaction, and then minimizes cardinality among implicants. The same polarity semantics also yields natural variants: shrinking is obtained by fixing polarities that are incompatible with a given satisfying assignment, and projected objectives are obtained by penalizing only selected variables. Thus, the contribution is not a new covering characterization of implicants, but a SAT-derived Ising/QUBO semantics whose ground states are minimum-cardinality partial assignments.


\section{Quadratic Encoding of SAT with Partial Assignment Semantics}
\label{sec:encoding}

\subsection{Polarity-Splitting Representation of Variables}
\label{subsec:polarity-splitting}

Let $F$ be a CNF formula over variables $\mathcal{V}=\{A_1,\dots,A_n\}$\footnote{We assume a standard preprocessing step in which unit propagation is applied to remove unit clauses, tautological clauses are removed, and duplicate literals inside clauses are eliminated.}. For each propositional variable $A_i$, we introduce in the QUBO encoding two binary \emph{polarity spins} ($p_i$ and $n_i$) intended to encode the positive and negative literal, respectively.

This representation is inspired by the well-known \emph{dual-rail} encoding used in SAT/MaxSAT reasoning, which duplicates each variable into two rails corresponding to its two polarities \cite{bonet2018maxsat}. Concretely, we interpret each pair $(p_i,n_i)$ as a three-valued assignment:

 \begin{equation}
 \label{eq:semantic}
(p_i,n_i) =
\begin{cases}
(1,0) &\text{$A_i=\top$,} \\
(0,1) &\text{$A_i=\bot$,} \\
(0,0) &\text{$A_i$ unassigned (don't-care),} \\
(1,1) &\text{inconsistent (forbidden).}
\end{cases}
 \end{equation}

To exclude inconsistent pairs, we add the \emph{semantic consistency penalty} function:
\[
E_{\mathrm{excl}}(i) \;=\; M \cdot p_i n_i
\]
where $M>0$ is a penalty weight, high enough to guarantee assignments inconsistent with the dual rail semantic not to be valid solutions.

\subsection{Clause Encoding}
\label{subsec:clause-encoding}

Consider a clause
$
C_k = (\ell_{k,1} \lor \ell_{k,2} \lor \ell_{k,3})
$.
We now map each literal $\ell_{k,j}$ to its corresponding polarity spin, i.e. we encode the clause 
$([\![\ell_{k,1}]\!] \lor [\![\ell_{k,2}]\!] \lor [\![\ell_{k,3}]\!])$ where:
\[
[\![\ell_{k,j}]\!] =
\begin{cases}
p_i & \text{if } \ell_{k,j} = A_i,\\
n_i & \text{if } \ell_{k,j} = \neg A_i.
\end{cases}
\]

Let $a_k=[\![\ell_{k,1}]\!]$, $b_k=[\![\ell_{k,2}]\!]$, and $d_k=[\![\ell_{k,3}]\!]$.
A clause is violated exactly when all three corresponding polarity spins are $0$. A standard cubic penalty for this is $(1-a_k)(1-b_k)(1-d_k)$, but this is not quadratic. We therefore follow the approach from \cite{chancellor2016direct} and introduce one auxiliary variable $c_k$ per clause to capture the
\emph{``first-two-literals-false''} condition:
\[
c_k \;=\; (1-a_k)(1-b_k) \;\;\equiv\;\; (\neg a_k \wedge \neg b_k).
\]
Intuitively, $c_k=1$ means that neither of the first two literals satisfies the clause,
and satisfaction must be ensured by the third literal $d_k$.

We use standard 
quadratic penalty functions \cite{mandal2020compressed} enforcing $z = xy$ over $x,y,z\in\{0,1\}$:
\begin{equation}
\label{eq:lambda1}
    P_{\wedge}(x,y,z) \;=\; \lambda\bigl(xy - 2xz - 2yz + 3z\bigr) 
\end{equation}

which satisfies $P_{\wedge}(x,y,z)=0$ iff $z=xy$, and $P_{\wedge}(x,y,z)>0$ otherwise. Here $\lambda>0$ is the clause-penalty scale used to enforce the optimizer to satisfy the constraint. The higher the $\lambda$, the higher the penalty when this does not happen.
In our setting we substitute $x=1-a_k$, $y=1-b_k$, and $z=c_k$, obtaining:
\begin{equation}
P_{\wedge}(1-a_k,1-b_k,c_k)
=
\lambda\Bigl(
1 - a_k - b_k + a_k b_k
- c_k
+ 2a_k c_k
+ 2b_k c_k
\Bigr)   
\end{equation}

Finally, we penalize the unique remaining unsatisfying configuration, namely
$c_k=1$ and $d_k=0$ (which corresponds to $a_k=b_k=d_k=0$), by using the penalty function:
\[
E_{\mathrm{sat}}(k) \;=\; \lambda\, c_k(1-d_k)
\]

The final \emph{clause penalty} function for clause $C_k$ is therefore:
\begin{equation}
\label{eq:lambda2}
    E_{\mathrm{clause}}(k)
\;=\;
P_{\wedge}(1-a_k,1-b_k,c_k)
\;+\;
\lambda\,c_k(1-d_k).
\end{equation}

The $\lambda$ in Equation \ref{eq:lambda2} is the same as the one presented in Equation \ref{eq:lambda1}. By construction, $E_{\mathrm{clause}}(k)=0$ iff there exists a setting of $c_k$ such that the clause is satisfied by at least one of the polarity spins. Tables~\ref{tab:and_gadget} and~\ref{tab:sat_term} enumerate the penalty function behavior. Clauses with more than three literals are handled by iterating the same construction: auxiliary clause variables are introduced to recursively aggregate partial penalty functions $P_{\wedge}$, until the satisfaction of the whole clause is enforced by the final quadratic penalty $E_{\mathrm{sat}}(k)$.

If a clause has only two literals (i.e., after transforming a non-CNF formula via Tseitin encoding), $C_k=(\ell_{k,1}\lor \ell_{k,2})$, then we need not introduce an auxiliary variable. Let $a_k=[\![\ell_{k,1}]\!]$ and $b_k=[\![\ell_{k,2}]\!]$. The clause is violated iff $a_k=b_k=0$, hence we can enforce satisfaction with the quadratic penalty
\[
E_{\mathrm{clause}}(k) \;=\; \lambda\,(1-a_k)(1-b_k),
\]
which is already quadratic and equals $0$ iff $a_k\lor b_k$ is satisfied by at least one polarity spin.

\begin{table}[t]
\centering

\begin{minipage}{0.48\linewidth}
\centering
\begin{tabular}{ccc|c}
$a_k$ & $b_k$ & $c_k$ & $P_{\wedge}(1-a_k,1-b_k,c_k)$ \\
\hline
0 & 0 & 0 & $\lambda$ \\
0 & 0 & 1 & 0 \\
0 & 1 & 0 & 0 \\
0 & 1 & 1 & $\lambda$ \\
1 & 0 & 0 & 0 \\
1 & 0 & 1 & $\lambda$ \\
1 & 1 & 0 & 0 \\
1 & 1 & 1 & $3\lambda$ \\
\end{tabular}
\caption{Truth table for the AND-penalty enforcing
$c_k=(1-a_k)(1-b_k)$.}
\label{tab:and_gadget}
\end{minipage}
\hfill
\begin{minipage}{0.48\linewidth}
\centering
\begin{tabular}{cc|c}
$c_k$ & $d_k$ & $E_{\mathrm{sat}}(k)=\lambda c_k(1-d_k)$ \\
\hline
0 & 0 & 0 \\
0 & 1 & 0 \\
1 & 0 & $\lambda$ \\
1 & 1 & 0 \\
\end{tabular}
\caption{Truth table for $E_{\mathrm{sat}}(k)$:
it penalizes exactly the case $c_k=1$ and $d_k=0$ (i.e. all three variables in clause $k$ are false).}
\label{tab:sat_term}
\end{minipage}

\end{table}

\subsection{Sparsity-Promoting Term}
\label{subsec:sparsity}

To bias solutions toward short partial assignments, we add a \emph{sparsity penalty} function
\[
E_{\mathrm{sparse}}
\;=\;
\gamma \sum_{i=1}^{n}(p_i+n_i),
\]
which penalizes active spins (i.e., polarity spins set to 1). Here $\gamma>0$ is the sparsity weight, i.e., the cost of activating a polarity spin. Under proper weight regimes (discussed in more detail in Section~\ref{sec:weights}), clause satisfaction and polarity consistency dominate sparsity, so the optimizer prefers setting $(p_i,n_i)=(0,0)$ whenever this does not compromise satisfaction. This term is also the mechanism through which we later realize shrinking and projection: by applying sparsity weights only to a designated visible set $P\subseteq\mathcal V$, hidden variables may act as unpenalized support for satisfiability.

\subsection{The SAT-to-Ising/QUBO Objective}
\label{subsec:full-objective}

Given $F$ with clauses $C_1,\dots,C_m$ over $n$ variables, the full quadratic energy expression is:

\begin{equation}
\label{eq:EF}
    E_F
=
\sum_{i=1}^n M\, p_i n_i
+
\sum_{k=1}^m E_{\mathrm{clause}}(k)
+
\gamma \sum_{i=1}^n (p_i + n_i)
\end{equation}

The resulting quadratic model uses, in the worst case, $O\!\left(n + \sum_{k=1}^m |C_k|\right)$ binary variables: two polarity spins for each Boolean variable and then auxiliary spins for each $P_{\wedge}$.

\section{Weight Regimes and Minimum-Cardinality Partial Assignments}
\label{sec:weights}

The intended goal is that ground states correspond to satisfying assignments
that activate a minimum number of polarity spins. The QUBO weights must be tuned to avoid ground states corresponding to assignments violating the semantics from Section \ref{sec:encoding}.


\begin{theorem}[Minimum-cardinality satisfying partial assignments]
\label{prop:minimal-partial-models}
Let $F$ be a satisfiable CNF formula and assume the weights satisfy
\[
\lambda > n\,\gamma
\qquad\text{and}\qquad
M > n\,\gamma.
\]
Then every ground state of $E_F$ is ($i$) consistent, ($ii$) satisfies all clauses of $F$,
and ($iii$) minimizes the number of spin polarities set to 1 among all consistent satisfying assignments, thus corresponding to a minimum-cardinality implicant for $F$. Thus, the encoding reduces the computation of minimum-cardinality implicants to exact ground-state optimization.
\end{theorem}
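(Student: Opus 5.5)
The plan is an exchange argument that compares any candidate ground state with a fixed reference configuration. First I build the reference state. Since $F$ is satisfiable, there is a total model $\eta$. I encode it as $(p_i,n_i)=(1,0)$ or $(0,1)$ according to $\eta(A_i)$, and I set every auxiliary spin $c_k$ (and the auxiliaries of longer clauses) to its gate value, i.e.\ $c_k=(1-a_k)(1-b_k)$. By the truth tables (Tables~\ref{tab:and_gadget} and~\ref{tab:sat_term}), every $P_\wedge$ term and every $E_{\mathrm{sat}}$ term vanishes. The consistency terms vanish because no pair equals $(1,1)$. Hence $E_F(\text{ref}) = \gamma n$, which gives the upper bound $\min E_F \le n\gamma$.

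Next I show that every ground state $z$ is consistent and satisfies all clauses. All terms of $E_F$ are nonnegative: $P_\wedge\ge 0$ by the table, and the other terms are products of $\{0,1\}$ values with positive weights. Suppose $z$ contains a pair with $p_in_i=1$. Then $E_F(z)\ge M> n\gamma\ge \min E_F$, which is a contradiction. Now suppose some clause $C_k$ has all its literal spins equal to $0$. For a $3$-clause, either $c_k=0$, giving $P_\wedge=\lambda$, or $c_k=1$, giving $E_{\mathrm{sat}}=\lambda$. Either way the clause contributes at least $\lambda>n\gamma$, again a contradiction. Two-literal clauses are immediate. Longer clauses need more care, because there are several chained auxiliaries. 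There I argue by induction along the chain: if every literal is $0$, then either some $P_\wedge$ gate is violated, costing at least $\lambda$, or all gates are correct, so the final aggregated auxiliary is $1$ and the final $E_{\mathrm{sat}}$ costs $\lambda$. I expect this multi-literal chaining step to be the main place where care is needed, since the text only sketches the construction. Consequently, in a ground state all penalty terms other than sparsity contribute $0$, provided the auxiliaries take their gate values. Otherwise they can be reset to gate values at no extra cost, and since energy is minimal every nonzero term must in fact be absent: if a gate were violated with clause satisfied, switching the auxiliary to its gate value strictly lowers the energy. So $E_F(z)=\gamma\cdot\#\{\text{active polarity spins}\}$.

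For (iii), let $\mu_z$ be the set of literals whose polarity spin is $1$. By consistency this is a well-defined partial assignment. Every clause contains a literal of $\mu_z$, so every total extension of $\mu_z$ satisfies $F$, and $\mu_z$ is an implicant. Conversely, I take any implicant $\mu$. Since the footnoted preprocessing removes tautologies, I can assume an implicant of a CNF must hit every clause with a literal. If some clause had no literal in $\mu$, then setting all of that clause's variables outside $\mu$ to falsify it, while consistent with $\mu$ because the clause's literals are not negated by $\mu$ in a way that satisfies it, would produce a falsifying extension. This gives a consistent configuration with energy $\gamma|\mu|$ after setting auxiliaries to gate values. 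Minimality of $E_F(z)$ then yields $|\mu_z|\le|\mu|$, so $\mu_z$ is a minimum-cardinality implicant, and the reduction claim follows.
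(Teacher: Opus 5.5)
Your proposal is correct and follows the paper's own argument: compare every ground state with the encoding of a total model (energy $n\gamma$), use $M>n\gamma$ and $\lambda>n\gamma$ to exclude inconsistent and clause-violating configurations, and then compare sparsity costs. It is more complete than the paper's write-up, since you make explicit the nonnegativity of all terms, the chained auxiliaries for long clauses, and the fact that, after tautology removal, implicants of a CNF are exactly the consistent clause-hitting literal sets; the only thing to tidy is that in a chain, switching a single violated auxiliary need not strictly lower the energy, so justify that step by resetting all auxiliaries at once (or simply use $E_F(z)\ge\gamma|\mu_z|$, which already suffices for (iii)).
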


\begin{proof}
Let $E_F$ be the energy function based on the double polarity spin encoding in Section~\ref{sec:encoding}.
Let $\mathbf{z}^\star$ be a ground state (minimum-energy spin configuration) of $E_F$, and
let $\eta^\star$ be the induced three-valued assignment over $\mathcal V$
obtained from $\mathbf{z}^\star$ by reverting Equation \ref{eq:semantic}.

Since $F$ is satisfiable, fix an arbitrary total satisfying assignment
$\tilde{\eta}$ of $F$, and let $\tilde{\mathbf{z}}$ be its consistent polarity spins
encoding. For this configuration, all clause penalties can be set to 0, hence $E_F(\tilde{\mathbf{z}})= n\gamma$.

\textbf{(i) Clause satisfaction.}
Assume for contradiction that $\eta^\star$ violates at least one clause of $F$.
Consider the minimum value of the clause penalty function for a violated clause
(i.e., $a_k=b_k=d_k=0$). By construction of $E_{\mathrm{clause}}(k)$, after
minimizing over $c_k$ the clause contributes at least $\lambda$.
Therefore
\[
E_F(\mathbf{z}^\star)\;\ge\;\lambda.
\]
Since $\lambda>n\gamma$ and $E_F(\tilde{\mathbf{z}})\le n\gamma$, we obtain
$E_F(\mathbf{z}^\star)>\,E_F(\tilde{\mathbf{z}})$, contradicting that
$\mathbf{z}^\star$ is a ground state. Hence every ground state satisfies all
clauses.

\textbf{(ii) Polarity consistency.}
Assume for contradiction that $\mathbf{z}^\star$ is inconsistent, i.e., there
exists a variable $A_i$ with $p_i=n_i=1$. Then the semantic consistency term contributes $M$, so
\[
E_F(\mathbf{z}^\star)\;\ge\;M.
\]
Using $M>n\gamma$ and again $E_F(\tilde{\mathbf{z}})\le n\gamma$, we derive
$E_F(\mathbf{z}^\star)>\,E_F(\tilde{\mathbf{z}})$, contradicting that $\mathbf{z}^\star$ is a ground state. Therefore every ground
state is polarity-consistent.

\textbf{(iii) Minimum-cardinality among satisfying consistent assignments.}
By (i) and (ii), $\mathbf{z}^\star$ is consistent and satisfies all clauses, so
all clause penalties and semantic consistency penalties vanish and
\[
E_F(\mathbf{z}^\star)=\gamma\cdot s(\mathbf{z}^\star),
\]
where $s(\mathbf{z})$ is the number of active polarity spins in
$\mathbf{z}$.
Assume for contradiction that there exists another satisfying
configuration $\mathbf{z}'$ with strictly fewer active polarities:
$s(\mathbf{z}')<s(\mathbf{z}^\star)$.
Then $E_F(\mathbf{z}')=\gamma s(\mathbf{z}')<\gamma s(\mathbf{z}^\star)=E_F(\mathbf{z}^\star)$,
contradicting that $\mathbf{z}^\star$ is a ground state.

Thus $\mathbf{z}^\star$ minimizes the number of active polarities among all
consistent satisfying configurations, i.e., it corresponds to a
minimum-cardinality satisfying partial assignment (and hence, under our CNF
notion, to a minimum-cardinality implicant).
\end{proof}

\begin{corollary}
\label{cor:low-energy-feasible}
Assume $\lambda > n\gamma$ and $M > n\gamma$.   Then any spin configuration $\mathbf{z}$ with $E_F(\mathbf{z}) \le n\gamma$ is polarity-consistent and satisfies all clauses of $F$ (hence induces a consistent satisfying three-valued assignment under the polarity semantics). 
\end{corollary}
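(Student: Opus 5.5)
The plan is to reuse the lower-bound arguments from parts (i) and (ii) of the proof of Theorem~\ref{prop:minimal-partial-models}, dropping the comparison with a ground state. The threshold $n\gamma$ plays the role that $E_F(\tilde{\mathbf{z}})$ played there. The argument is by contraposition: if $\mathbf{z}$ is inconsistent or violates a clause, then $E_F(\mathbf{z}) > n\gamma$.

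The first step is to check that every summand of $E_F$ in Equation~\ref{eq:EF} is nonnegative on all of $\{0,1\}$. The terms $M p_i n_i$ and $\gamma(p_i+n_i)$ are nonnegative since $M,\gamma>0$. For a 3-literal clause, Table~\ref{tab:and_gadget} shows $P_{\wedge}(1-a_k,1-b_k,c_k)\in\{0,\lambda,3\lambda\}$, and Table~\ref{tab:sat_term} shows $E_{\mathrm{sat}}(k)\ge 0$. For a 2-literal clause, $\lambda(1-a_k)(1-b_k)\ge 0$. For longer clauses, the recursive construction is a sum of $P_{\wedge}$ gadgets plus one final $E_{\mathrm{sat}}$ term, so it is nonnegative as well. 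Consequently $E_F(\mathbf{z})$ is bounded below by any single one of its summands.

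\emph{Inconsistency.} Suppose $p_i=n_i=1$ for some $i$. Then $E_F(\mathbf{z})\ge M > n\gamma$.

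\emph{Clause violation.} Suppose some clause $C_k$ has all of its polarity spins equal to $0$. For a 3-literal clause we split on the auxiliary spin $c_k$:
\begin{itemize}
\item if $c_k=0$, Table~\ref{tab:and_gadget} gives $P_{\wedge}=\lambda$;
\item if $c_k=1$, then $d_k=0$, and Table~\ref{tab:sat_term} gives $E_{\mathrm{sat}}(k)=\lambda$.
\end{itemize}
Either way $E_{\mathrm{clause}}(k)\ge\lambda$. For a 2-literal clause, the penalty equals $\lambda$ directly.

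The main obstacle is the recursive case of clauses with more than three literals. There I would argue inductively along the chain of auxiliary variables. If every gadget in the chain is exact, meaning each auxiliary equals the AND of the negated inputs, then the final auxiliary equals $1$ while the last literal's spin is $0$, so the final $E_{\mathrm{sat}}$ term contributes $\lambda$. Otherwise some gadget is inexact, and that $P_{\wedge}$ term is at least $\lambda$ by Table~\ref{tab:and_gadget}. Hence any violated clause forces $E_F(\mathbf{z})\ge\lambda>n\gamma$.

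Combining the two cases with nonnegativity, $E_F(\mathbf{z})\le n\gamma$ rules out both inconsistency and clause violation. The configuration $\mathbf{z}$ is therefore polarity-consistent, and every clause contains a literal whose polarity spin is $1$. Decoding via Equation~\ref{eq:semantic} then gives a consistent three-valued assignment that satisfies $F$.
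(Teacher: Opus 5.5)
Your proof is correct and follows the paper's argument: a violated clause forces its clause term to be at least $\lambda$, and an inconsistent pair forces the consistency term to contribute $M$, and each of these exceeds $n\gamma$. You also make explicit what the paper leaves implicit, namely that every summand of $E_F$ is nonnegative, that the bound holds for both values of $c_k$, and that it extends to clauses with more than three literals.
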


\begin{proof}
If $\mathbf{z}$ violates some clause, then the corresponding clause term contributes at least $\lambda$, so $E_F(\mathbf{z}) \ge \lambda > n\gamma$, a contradiction. If $\mathbf{z}$ is inconsistent, then the semantic consistency term contributes $M$, so $E_F(\mathbf{z}) \ge M > n\gamma$, a contradiction.
\end{proof}

A concrete admissible choice satisfying the above condition is $\gamma = 1$ and $\lambda = M = n + 1$. These coefficients are integer-valued, have magnitude $O(n)$, and can be rescaled uniformly to meet hardware-dependent bounds without affecting the ground-state semantics. By \cref{cor:low-energy-feasible}, under this weight regime any configuration with energy at most $n\gamma$ must be clause-satisfying and polarity-consistent.

\begin{example}
We illustrate the encoding of the one-clause formula
$F \;=\; (x_1 \lor \neg x_2 \lor x_3)$.
For this clause we have $a_1 = p_1$, $b_1 = n_2$, and $d_1 = p_3$.
Introducing the auxiliary spin $c_1$, the QUBO energy (for this single-clause instance) is
\[
\begin{aligned}
E_F
&=
M(p_1 n_1 + p_2 n_2 + p_3 n_3)
\;+\;
\gamma(p_1 + n_1 + p_2 + n_2 + p_3 + n_3)
\\[2pt]
&\quad
+\;
\lambda(
1 - p_1 - n_2 + p_1 n_2
- c_1
+ 2 p_1 c_1
+ 2 n_2 c_1
)
\;+\;
\lambda\, c_1(1 - p_3).
\end{aligned}
\]
Throughout the example we assume the weight regime $\lambda>n\gamma$ and $M>n\gamma$
from Section~\ref{sec:weights}, where here $n=3$.

\medskip
\noindent
\textbf{Case 1: inconsistent polarity.}
Consider the contradictory polarity assignment $\mathbf{z_1}$: 
\[
p_1 = n_1 = 1,
\qquad
p_2 = n_2 = p_3 = n_3 = 0.
\]
Since $p_1=1$, the clause is already satisfied regardless of $(n_2,p_3)$, hence the Ising model can choose $c_1=0$ to make the clause penalties vanish. However, the semantic consistency term contributes $M$ and sparsity contributes $2\gamma$, so
\[
E_F(\mathbf{z_1}) = M + 2\gamma.
\]
Because $M>3\gamma$, this energy exceeds the sound bound $3\gamma$ for clause satisfaction and consistency.

\medskip
\noindent
\textbf{Case 2: all polarities inactive.}
Let $\mathbf{z_2}$ set all polarity spins to zero:
\[
p_1=n_1=p_2=n_2=p_3=n_3=0.
\]
Then the clause $(x_1 \lor \neg x_2 \lor x_3)$ is violated. Indeed, setting $p_1=n_2=p_3=0$ forces all three literals to be false. Minimizing over $c_1$ yields a clause-violation penalty of at least $\lambda$: choosing $c_1=1$ satisfies the penalty function $c_1=(1-p_1)(1-n_2)$, but then the term $\lambda c_1(1-p_3)$ contributes $\lambda$.
Thus:
\[
E_F(\mathbf{z_2}) = \lambda,
\]
which is strictly larger than $3\gamma$ by the assumption $\lambda>3\gamma$.
Therefore, the configuration is not a satisfying assignment ground state.


\medskip
\noindent
\textbf{Case 3: a total satisfying assignment.}
Let $\eta_1 = x_1 \wedge \neg x_2 \wedge x_3$, mapped to the polarity spin configuration $\mathbf{z_3}$ with polarity encoding:
\[
(p_1,n_1)=(1,0),\qquad (p_2,n_2)=(0,1),\qquad (p_3,n_3)=(1,0).
\]
The clause is satisfied because $p_1=1$, so we get zero
clause penalty. The configuration is consistent, and sparsity activates three polarity spins, hence:
\[
E_F(\mathbf{z_3}) = 3\gamma.
\]
This matches the certificate bound $n\gamma$ for $n=3$, so it is consistent
with clause satisfaction under the weight regime.

\medskip
\noindent
\textbf{Case 4: a partial satisfying assignment.}
Finally, consider the partial assignment $\eta_2 = x_1$, setting the polarity spin configuration $\mathbf{z_4}$ to:
\[
(p_1,n_1)=(1,0),\qquad (p_2,n_2)=(0,0),\qquad (p_3,n_3)=(0,0).
\]
The clause is satisfied because $p_1=1$, so again choosing $c_1=0$ sets the clause penalty to zero.
The configuration is consistent, and sparsity activates only one polarity, hence:
\[
E_F(\mathbf{z_4}) = \gamma.
\]
Therefore, $\eta_2$ has strictly smaller energy than any total satisfying assignment for this instance,
and is preferred by the sparsity term while still satisfying $F$.
\end{example}

\section{Implicant Shrinking via Fixed Polarities}
\label{sec:implicant-shrinking}

Section \ref{sec:encoding} considered minimizing $E_F$ over the full search space to obtain minimum-cardinality implicants from scratch. In many workflows, however, a total satisfying assignment $\eta$ is already available (e.g., produced by a SAT/AllSAT solver), and the goal is to shrink $\eta$ into a shorter implicant by deleting literals that are not required.

Let $\eta$ be a total satisfying assignment of $F$, written as a conjunction of
literals
\[
\eta \;=\; \bigwedge_{i=1}^n \ell_i,
\qquad\text{where }\ell_i\in\{A_i,\neg A_i\}.
\]
We restrict the value of a subset of polarity spins so that we may only drop literals from $\eta$:
\begin{equation}
\label{eq:restrict}
    \begin{cases}
\eta(A_i)=\top: & n_i = 0,\quad p_i\in\{0,1\}\ \text{free},\\[2pt]
\eta(A_i)=\bot: & p_i = 0,\quad n_i\in\{0,1\}\ \text{free}.
\end{cases}
\end{equation}
Thus, for each variable $A_i$, the polarity spin consistent with $\eta$ remains switchable
(kept if set to $1$, dropped if set to $0$), while the opposite polarity is
forbidden to get to 1.

Under these constraints, every semantically valid polarity spin configuration decodes to a partial assignment $\mu\subseteq\eta$ obtained by deleting some literals of $\eta$. In particular, setting the remaining allowed polarity spin to $0$ makes $A_i$ unassigned, which matches the implicant semantics: an unassigned variable must be a genuine don't-care, i.e., the formula must remain satisfied for both truth values of that variable.

\begin{theorem}[Minimum-cardinality implicants within a given assignment]
\label{lem:implicant-shrinking}
Assume $F$ is satisfiable and the same weights configuration as in Theorem \ref{prop:minimal-partial-models}. Let $\eta$ be a total satisfying assignment of $F$, and let $E_F^\eta$ be the
energy obtained from $E_F$ by imposing the polarity restrictions from Equation \ref{eq:restrict}.
Then every ground state of $E_F^\eta$ decodes to a partial assignment
$\mu\subseteq\eta$ that is an implicant of $F$ and has minimum cardinality among
all implicants contained in $\eta$. Consequently, $\mu$ is also inclusion-minimal  within the family $\{\mu'\subseteq\eta \mid \mu'\text{ implicant of }F\}$.
\end{theorem}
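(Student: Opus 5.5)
The plan is to mirror the proof of Theorem~\ref{prop:minimal-partial-models}, restricted to the subspace of configurations allowed by Equation~\ref{eq:restrict}. The reference configuration $\tilde{\mathbf z}$ will be the encoding of $\eta$ itself. It respects the restrictions, is consistent, and satisfies every clause, so (choosing each $c_k$ appropriately) $E_F^\eta(\tilde{\mathbf z}) = n\gamma$. Hence any ground state $\mathbf z^\star$ of $E_F^\eta$ has energy at most $n\gamma$.

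\textbf{Step 1: the ground state is a literal subset of $\eta$ that satisfies $F$.} Under the restrictions, $p_i n_i = 0$ holds automatically, so consistency is free. If $\mathbf z^\star$ left some clause with all its polarity spins at $0$, the clause term would contribute at least $\lambda > n\gamma$, a contradiction. This is exactly the argument of Corollary~\ref{cor:low-energy-feasible}, applied within the restricted space. The decoded $\mu$ sets only polarities consistent with $\eta$, so $\mu \subseteq \eta$.

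\textbf{Step 2: the key semantic point, which I expect to be the main obstacle.} We must show that a consistent partial assignment $\mu$ has ``every clause has an active polarity spin'' if and only if $\mu$ is an implicant of $F$. For the forward direction, if each clause contains a literal of $\mu$, every total extension satisfies each clause. For the converse, suppose some clause $C$ has no literal in $\mu$. Since tautologies and duplicate literals were removed by preprocessing, we can extend $\mu$ by falsifying every literal of $C$; the variables of $C$ are either unassigned or assigned so that their literal in $C$ is false. This yields a total $\eta'\supseteq\mu$ with $\eta'\not\models F$. So clause-satisfaction at the polarity level coincides with implicancy, and by Step~1, $\mu$ is an implicant.

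\textbf{Step 3: minimality.} Once all penalties vanish, $E_F^\eta(\mathbf z^\star) = \gamma|\mu|$. Now take any implicant $\mu' \subseteq \eta$. By Step~2 every clause contains a literal of $\mu'$. So its polarity encoding, which obeys Equation~\ref{eq:restrict}, together with the optimal $c_k$ choices, has energy $\gamma|\mu'|$. If $|\mu'|<|\mu|$, this would contradict the ground-state property. Hence $\mu$ has minimum cardinality in the family. Inclusion-minimality within the family follows immediately, because any proper subset that is still an implicant would have strictly smaller cardinality.
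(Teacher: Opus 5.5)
Your proof is correct and follows the same route as the paper. The encoding of $\eta$ is the reference configuration with energy $n\gamma$, consistency is automatic under the restrictions of Equation~\ref{eq:restrict}, and a violated clause costs at least $\lambda>n\gamma$; minimum cardinality then follows by comparing against the encoding of any smaller implicant $\mu'\subseteq\eta$. Your Step~2 also makes one point explicit, using the tautology-removal preprocessing: that ``every clause has an active polarity spin'' is equivalent to being an implicant. The paper's proof relies on this only tacitly, notably when it asserts that the encoding of $\mu'$ has zero clause penalty.
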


\begin{proof}
Let $\mathbf z^\star$ be a ground state of $E_F^\eta$ and let $\mu^\star\subseteq\eta$
be its decoded partial assignment.

\textbf{(1) Clause satisfaction.}
Assume for contradiction that $\mu^\star$ does not satisfy some clause anymore (i.e., removing one literal that was necessary for the satisfiability of a clause). By the construction of $E_{\mathrm{clause}}(k)$, after minimizing over the auxiliary
variable $c_k$ a violated clause contributes at least $\lambda$, hence
$E_F^\eta(\mathbf z^\star)\ge \lambda$. On the other hand, the polarity encoding
$\tilde{\mathbf z}$ of the original total model $\eta$ is feasible for the
restricted problem and satisfies all clauses, so its energy is purely sparsity:
$E_F^\eta(\tilde{\mathbf z})=n\gamma$. Since $\lambda>n\gamma$, we obtain
$E_F^\eta(\mathbf z^\star)>E_F^\eta(\tilde{\mathbf z})$, contradicting that
$\mathbf z^\star$ is a ground state. Hence $\mu^\star$ is clause-satisfying.

\textbf{(2) Polarity consistency.}
In the reduced problem $E_F^\eta$, for each variable $A_i$ one polarity is
hard-fixed to $0$ (namely, $n_i=0$ if $\eta(A_i)=\top$ and $p_i=0$ if
$\eta(A_i)=\bot$). Thus, for every feasible configuration we have
$p_i n_i = 0$ for all $i$, i.e., feasibility already enforces polarity
consistency. Thus, the semantic consistency penalties are always $0$ on the
feasible set.

\textbf{(3) Minimum cardinality within $\eta$.}
By (1) and (2), all clause and semantic consistency penalties vanish on $\mathbf z^\star$,
so $E_F^\eta(\mathbf z^\star)=\gamma\cdot s(\mathbf{z^*})$, i.e., energy is exactly
$\gamma$ times the number of kept literals from $\eta$.
Assume for contradiction that there exists another implicant $\mu'\subseteq\eta$
with $|\mu'|<|\mu^\star|$. Its polarity encoding is feasible for the restricted
problem and has energy $E_F^\eta(\mu')=\gamma|\mu'|<\gamma|\mu^\star|=E_F^\eta(\mathbf z^\star)$,
contradicting that $\mathbf z^\star$ is a ground state. Therefore, $\mu^\star$ has
minimum cardinality among implicants consistent with $\eta$.
\end{proof}

\begin{example}
Consider the one-clause formula
\[
F = (x_1 \lor \neg x_2 \lor x_3),
\qquad
\eta = x_1 \wedge \neg x_2 \wedge x_3 .
\]
Translated to a polarity spin encoding, $\eta$ corresponds to
\[
(p_1,n_1)=(1,0),\qquad (p_2,n_2)=(0,1),\qquad (p_3,n_3)=(1,0).
\]

To restrict optimization to partial assignments $\mu\subseteq \eta$, we forbid the polarity spins opposite to $\eta$ for each variable:
\[
n_1=0,\qquad p_2=0,\qquad n_3=0,
\]
leaving the compatible polarities $p_1,n_2,p_3$ free in $\{0,1\}$.

Since one polarity spin per variable is hard-fixed to $0$, every feasible configuration satisfies $p_i n_i=0$, hence the semantic consistency term vanishes on the restricted search space. Moreover, for the clause $C=(x_1 \lor \neg x_2 \lor x_3)$ we have $a_1=p_1$, $b_1=n_2$, $d_1=p_3$. Minimizing the clause penalty function over $c_1$ yields:
\[
\min_{c_1\in\{0,1\}} E_{\mathrm{clause}}(1)
=
\begin{cases}
0 & \text{if } (p_1 \lor n_2 \lor p_3)=1,\\
\lambda & \text{if } (p_1,n_2,p_3)=(0,0,0).
\end{cases}
\]

Therefore, among satisfying restricted configurations, the objective reduces to the term
\[
E^\eta_F = \gamma(p_1+n_2+p_3).
\]
The satisfying configurations and their energies are thus:
\[
\begin{array}{ccl}
(p_1,n_2,p_3)=(1,1,1) &\leftrightarrow& x_1\wedge \neg x_2\wedge x_3,\quad E^\eta_F=3\gamma,\\[2pt]
(p_1,n_2,p_3)\in\{(1,1,0),(1,0,1),(0,1,1)\} &\leftrightarrow& \text{two-literal implicants},\quad E^\eta_F=2\gamma,\\[2pt]
(p_1,n_2,p_3)\in\{(1,0,0),(0,1,0),(0,0,1)\} &\leftrightarrow& \text{single-literal implicants},\quad E^\eta_F=\gamma.
\end{array}
\]
The configuration $(0,0,0)$ violates the clause and has energy
at least $\lambda>3\gamma$, hence it cannot be optimal. Every ground state has energy $\gamma$ and corresponds to a
single-literal implicant contained in $\eta$, e.g.\ $(p_1,n_2,p_3)=(1,0,0)$ decodes
to $\mu=x_1$.
\end{example}

\section{Projected Models and Projected Implicant Shrinking}
\label{sec:projected}

In many reasoning tasks, only a subset of variables is of interest. This situation arises, for instance, when non-CNF formulas are converted into CNF using Tseitin-style transformations~\cite{tseitin1983complexity} or Plaisted--Greenbaum encoding~\cite{plaisted1986structure}: fresh variables introduced by the transformation are existentially quantified and should not appear in the final models. More generally, explanation and optimization tasks often focus on a set of \emph{visible} variables, with other variables acting as auxiliary support \cite{Moehle2025}.




To bias solutions toward small assignments over $P$, we keep the clause and semantic consistency penalties unchanged, while sparsity is restricted to variables in $P$ only. We define the \emph{projected sparsity} term and update the final penalty function correspondingly:
\[
E_{\mathrm{proj}}
\;=\;
\gamma \sum_{A_i\in P}(p_i+n_i); \qquad\qquad\qquad E^{\mathrm{proj}}_F
\;=\;
\sum_{i=1}^n M\,p_i n_i
\;+\;
\sum_{k=1}^m E_{\mathrm{clause}}(k)
\;+\;
E_{\mathrm{proj}}
\]

Intuitively, hidden variables $H$ may freely take whichever polarity configuration satisfies the clauses, without paying sparsity. Compared to Equation \ref{eq:EF}, only variables in $P$ are pushed toward the unassigned state $(p_i,n_i)=(0,0)$ unless they are necessary for clause satisfaction.

We can use the encoding to perform implicant shrinking over a set of important variables $P$. In our encoding, this is obtained without changing the clause or consistency terms: we keep the objective $E^{\mathrm{proj}}_F$, but fix the polarity spins of the visible variables so that each $A_i\in P$ may either keep the value prescribed by $\eta$ or become unassigned, but can never flip. 
Concretely, for each $A_i\in P$ we impose:
\[
n_i=0\ \ \text{if}\ \ \eta(A_i)=\top, 
\qquad
p_i=0\ \ \text{if}\ \ \eta(A_i)=\bot,
\]
Hidden variables $H$ remain unconstrained and can adapt to satisfy the clause penalty function without paying sparsity penalty. 

A natural question is whether the projected sparsity encoding guarantees a projected implicant over $P$ that is minimum (projected minimum-cardinality) or even minimal (projected-prime). In general, it does not.

\begin{example}
\label{ex:xor-separation}
Let us consider the formula
\[
F \;=\; (x\lor y)\ \wedge\ (\neg x\lor \neg y),
\qquad
P=\{x\},\quad H=\{y\}.
\]
Under the projected-implicant semantics from Section~\ref{sec:preliminaries}, the empty projected assignment is a valid projected implicant: for $\rho(x)=\top$ choose
$\sigma(y)=\bot$, and for $\rho(x)=\bot$ choose $\sigma(y)=\top$.
Thus $\pi=\emptyset$ is projected-minimum (it fixes no variable in $P$).

However, our encoding cannot produce a decoded partial assignment $\mu$ that leaves $x$ unassigned. If $x$ is left free, then both completions $x=\top$ and $x=\bot$ must be handled, but in $F$ this forces $y$ to change with $x$. Since $\mu$ is decoded from a single configuration, it cannot accommodate both cases at once.
\end{example}

Recall the standard projected-implicant semantics definition, where $\pi$ is an implicant if:
\[
\forall \rho:(P\setminus vars(\pi))\to\{\bot,\top\}\ \exists \sigma:H\to\{\bot,\top\}:\ (\pi\wedge\rho\wedge\sigma)\models F.
\]
In contrast, the assignment $\mu$ returned by the QUBO encoding minimization enforces a uniform witness:
\[
\exists \sigma^\star: H\to\{\bot,\top\}\ \text{s.t.}\ 
\forall \rho: (P\setminus vars(\pi_\mu))\to\{\bot,\top\}:\quad
(\pi_\mu \wedge \rho \wedge \sigma^\star)\models F.
\]
which is a different (and strictly stronger) requirement: it must hold for all completions at once, rather than allowing a hidden completion to adapt to the visible completion. Example~\ref{ex:xor-separation} highlights the source of the mismatch: the value of the hidden variable must change with the completion of the visible one, so no single hidden assignment can witness all visible completions at once when these circular dependencies are in the CNF formulation.

This worst-case limitation does not preclude much stronger behavior in practice, in particular in real-world non-CNF instances. For non-CNF formulas converted into CNF by a Negative Normal Form (NNF) + Plaisted-Greenbaum pipeline, it is known that for every short partial assignment $\mu$ over the original variables, if $\mu$ entails the original formula, then there exists an assignment $\sigma$ to the introduced auxiliary variables such that $\mu \wedge \sigma$ entails the CNF encoding~\cite{masina2025cnf}.
Under this condition, the projected case reduces to the ordinary implicant case: every projected implicant $\pi$ generated from the QUBO model admits a single hidden assignment $\sigma$ such that $\pi \wedge \sigma$ entails $F$. Hence, the minimum-cardinality analysis extends to projected assignments as well; the formal statement is given in Appendix \ref{appendix:theorem}.

\section{Experimental Evaluation}
\label{sec:experiments}


Our experimental goal is the empirical validation of the optimization semantics induced by the QUBO model\footnote{A direct runtime comparison against specialized SAT-side implicant procedures would primarily reflect backend engineering choices on the QUBO optimizer, which are orthogonal to the scope of this paper.}. Theorems \ref{prop:minimal-partial-models} and \ref{lem:implicant-shrinking} characterize exact ground states; the experiments assess whether low-energy configurations returned by a generic QUBO optimizer decode to valid implicants and how often they satisfy minimality and minimum-cardinality in practice.


\subsection{Benchmarks}
\label{sec:exp:benchmarks}

The first benchmark consists of satisfiable random 3-SAT instances. We generate $10$ instances with $n$ variables ($n \in \{8,12,16,\dots,200\}$) at fixed clauses-to-variables density $m/n = 1.5$.
The structure of the solution space is crucial: near the SAT phase transition,
at a clause-to-variable ratio around $4.2$, random $3$-CNF formulae typically
admit very few satisfying assignments. Thus, it is unlikely that a
short partial assignment can cover many total satisfying assignments, since
there are few models to compact in the first place. In contrast, lower-density regimes, such as $m/n = 1.5$, admit many satisfying assignments while remaining non-trivial~\cite{spallitta2024disjoint}, and are therefore better suited for our task. 

We also consider 110 non-CNF formulas from recent work~\cite{masina2023cnf,masina2025cnf}, with number of variables $n \in \{20, 24, 28,\dots,60\}$. These instances are Boolean
formulas obtained by nesting Boolean operators up to a fixed depth, thus not directly written in CNF. We convert them to CNF using the pipeline in Masina et al.~\cite{masina2023cnf,masina2025cnf}: first NNF, then apply the Plaisted--Greenbaum transformation \cite{plaisted1986structure}. The conversion introduces auxiliary variables, which are treated as the hidden set $H$.

The two benchmark families are chosen to test complementary properties. Random 3-SAT isolates the core minimum-cardinality implicant behavior. The non-CNF benchmarks create a realistic projected setting in which visible and hidden variables must be treated differently.

\subsection{Experimental setup}
\label{sec:exp:setup}

For each instance, we construct the QUBO objective~$E_F$ as described in Section~\ref{sec:encoding}. We then optimize it using simulated annealing (SA) as a baseline quadratic optimizer, drawing $1000$ samples per run and returning the lowest-energy configuration observed. Decoding the polarity spins via Equation  \ref{eq:semantic} returns a partial assignment.


In addition to this one-shot SA run approach (\textsc{Basic}), we also test an iterative refinement variant (\textsc{Iter}), inspired by Ding et al.~\cite{ding2024effective}.
Starting from the best sample returned by SA, we freeze its polarity pattern and re-run SA on the resulting restricted model, pushing additional variables into the don't-care state.
In iterative rounds, we use $100$ samples and stop when the assignment no longer shrinks. For people interested in better understanding \textsc{Iter}, we provide a guided example in Appendix \ref{app:iterative-shrinking-example}.

We report results under two complementary design choices.
First, we consider whether optimization is performed directly on the full QUBO encoding from Section \ref{sec:encoding} (\textsc{Full}) or used to \emph{shrink} a pre-computed total satisfying assignment as in Section \ref{sec:implicant-shrinking} (\textsc{Shrink}).
Second, we consider whether sparsity is penalized over all variables as in Sections \ref{sec:encoding}-\ref{sec:implicant-shrinking} (\textsc{Standard}) or only over a designated visible set $P$ as in Section \ref{sec:projected} (\textsc{Projected}).

\subsection{Experiment 1: Compactness of partial assignments}
\label{sec:exp:compactness}

We first measure the compactness of our encoding (i.e., the length of the short assignments produced by \textsc{Basic}) on the random 3-SAT benchmarks.
In standard mode, we report the ratio $|\mu|/n$, whereas in projected mode, we randomly preselect the set $P$ (with $|P|$ being 75\% of the total variables) and report the ratio $|\pi|/|P|$.
Figure~\ref{fig:supportratio-all} reports mean$\pm$std over $10$ instances per $n$, while Figures~\ref{fig:scatter-standard} and~\ref{fig:scatter-projected} report the retrieved size per instance via scatterplots.

At fixed density, the ratios are approximately constant with respect to $n$. In standard mode, \textsc{Full} returns smaller assignments than \textsc{Shrink}, as expected: in \textsc{Shrink} the polarity pattern is fixed by the initial SAT model, so the optimizer can only delete literals under that sign pattern, reducing the degrees of freedom for reaching more compact assignments.
The same qualitative trend holds in projected mode.

\begin{figure*}
    \centering
\begin{minipage}{0.4\textwidth}
    \centering
    \includegraphics[width=\linewidth]{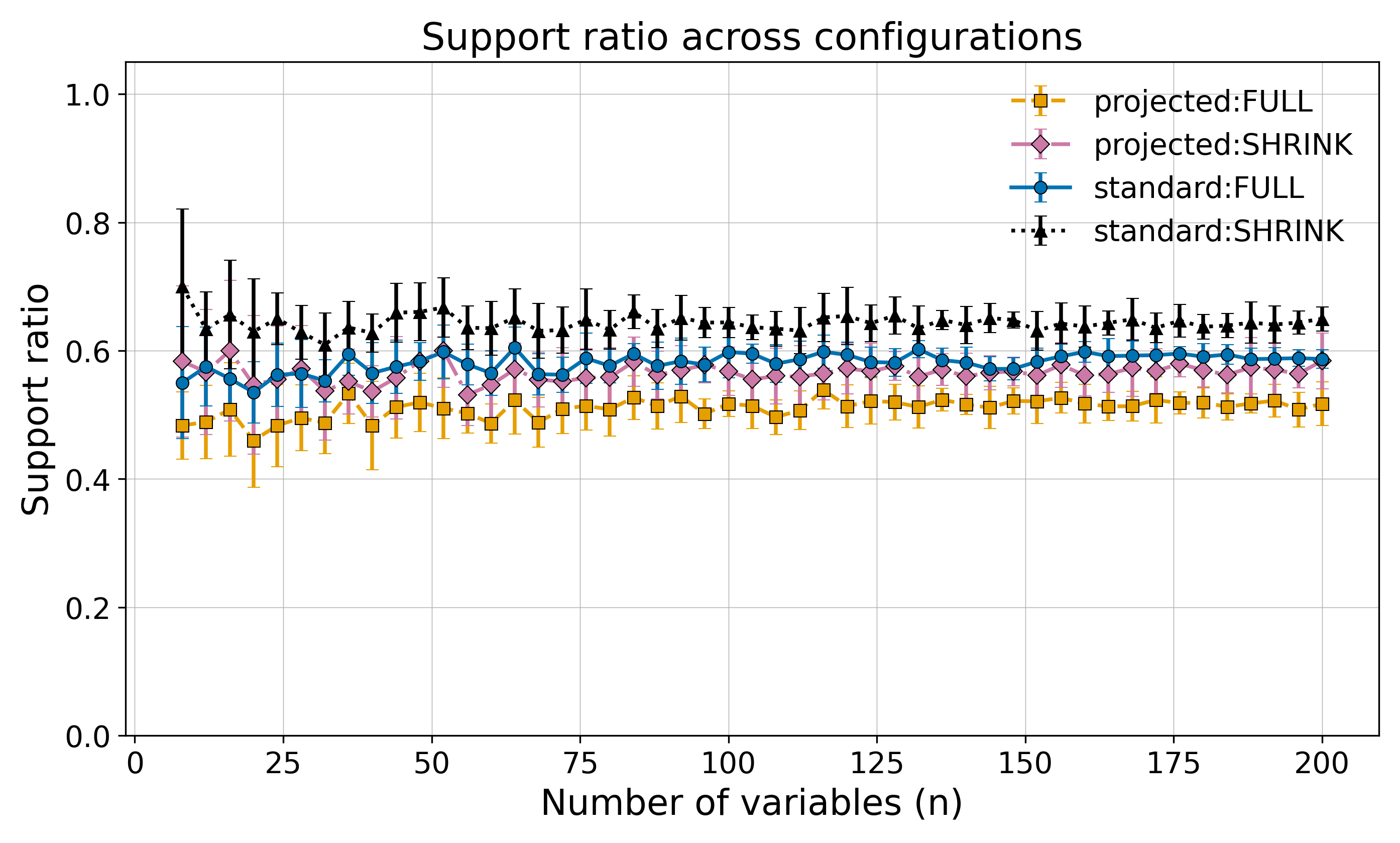}
    \caption{Ratio of assigned variables across configurations (standard: $|\mu|/n$, projected: $|\pi|/|P|$). Mean $\pm$ std over 10 instances per $n$.}
    \label{fig:supportratio-all}
  \end{minipage}
  \begin{minipage}{0.28\textwidth}
    \centering
    \includegraphics[width=1\linewidth]{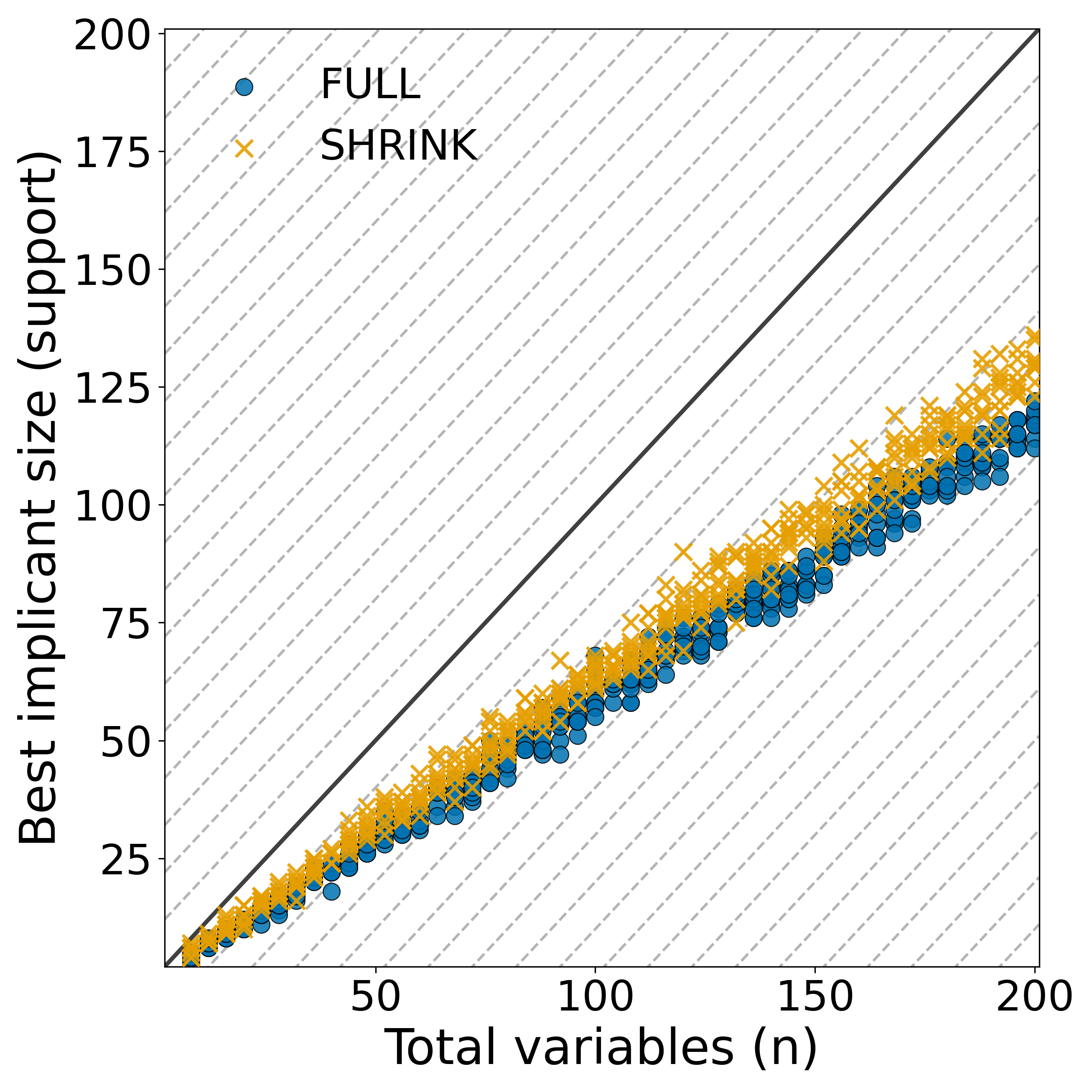}
    \caption{Standard mode: best implicant size $|\mu|$ vs $n$.}
    \label{fig:scatter-standard}
  \end{minipage}
  \hfill
  \begin{minipage}{0.28\textwidth}
    \centering
    \includegraphics[width=\linewidth]{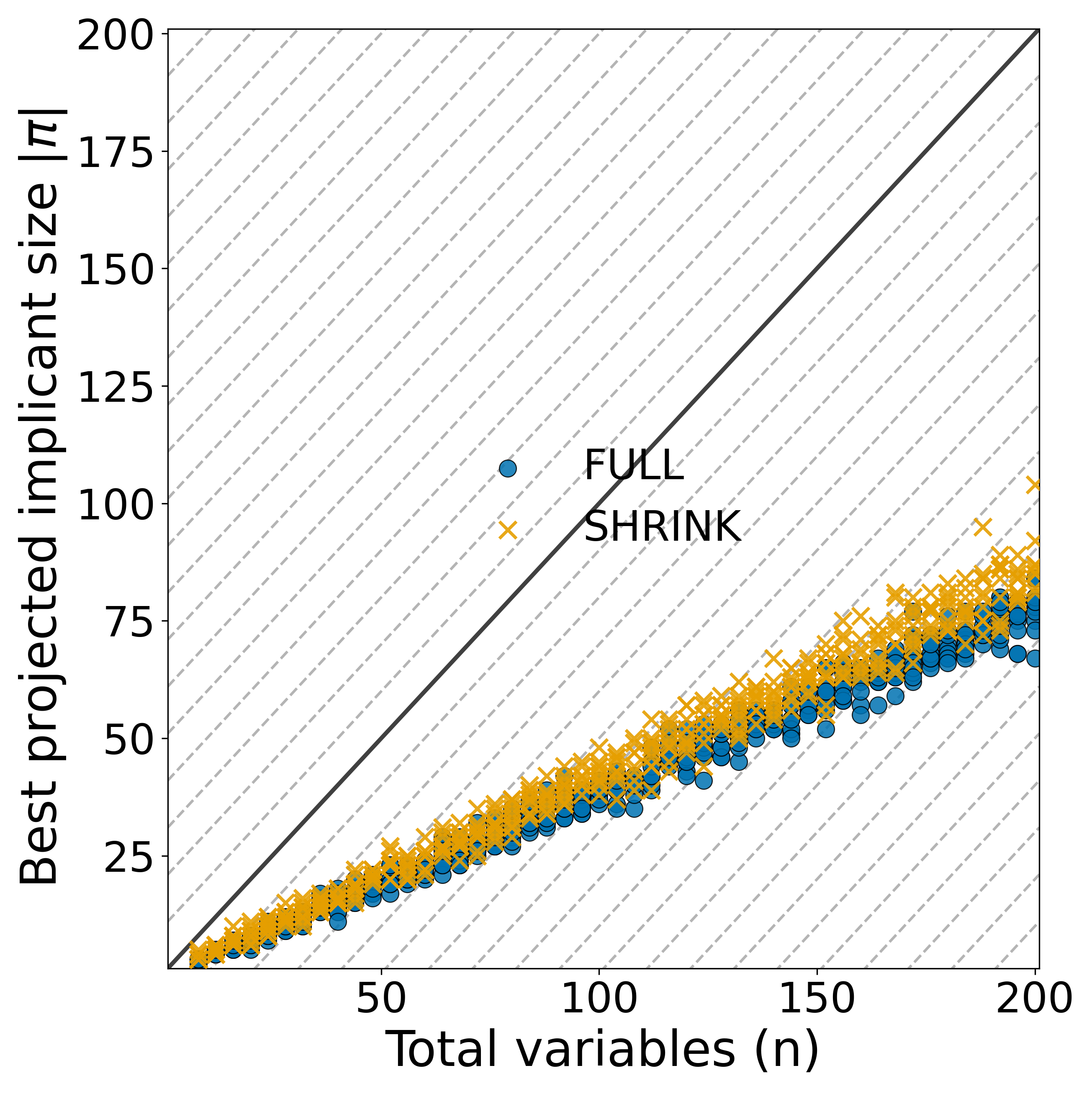}
    \caption{Projected mode: best projected size $|\pi|$ vs $n$. }
    \label{fig:scatter-projected}
  \end{minipage}
\end{figure*}

\subsection{Experiment 2: Certification of minimality and minimum-cardinality}
\label{sec:exp:optimality}

\begin{figure*}[t]
  \centering
  \begin{minipage}{1\textwidth}
    \centering
    \includegraphics[width=0.85\linewidth]{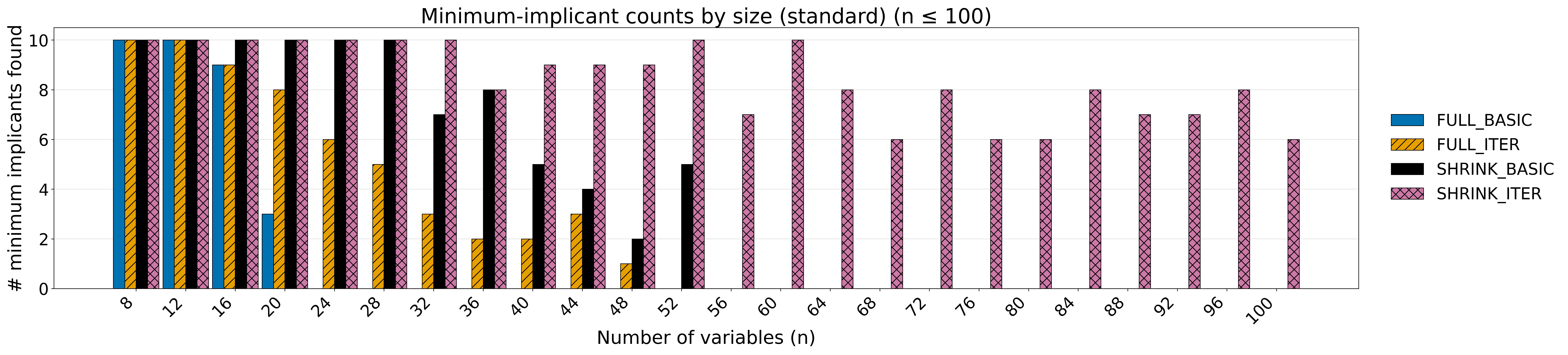}
    \label{fig:min-standard}
  \end{minipage}
  \hfill
  \begin{minipage}{1\textwidth}
    \centering
    \includegraphics[width=0.85\linewidth]{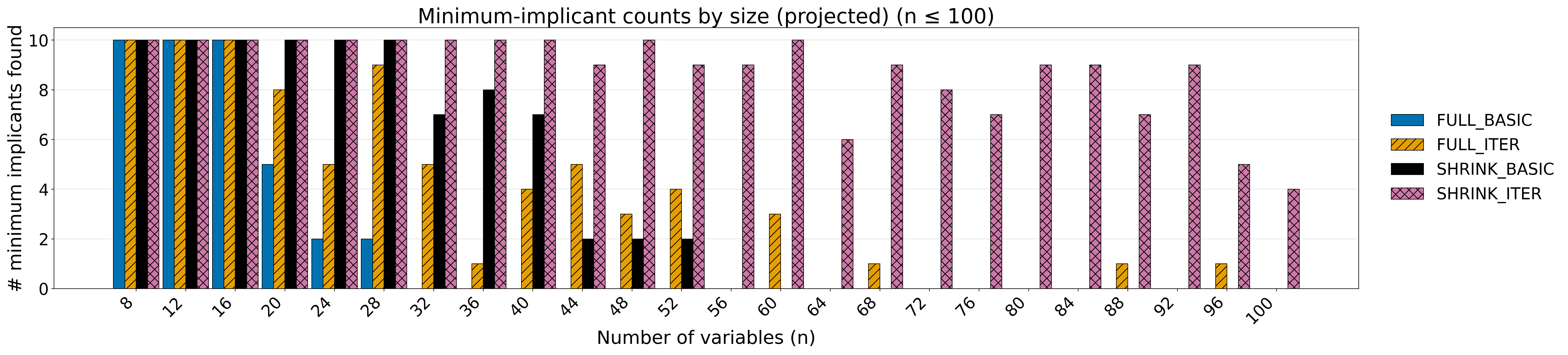}
    \caption{Minimum implicants found for $n\le 100$.
  \textbf{Top}: standard. \textbf{Bottom}: projected.}
    \label{fig:min-projected}
  \end{minipage}
\end{figure*}

\begin{figure*}[t]
  \centering
  \begin{minipage}{1\textwidth}
    \centering
    \includegraphics[width=0.85\linewidth]{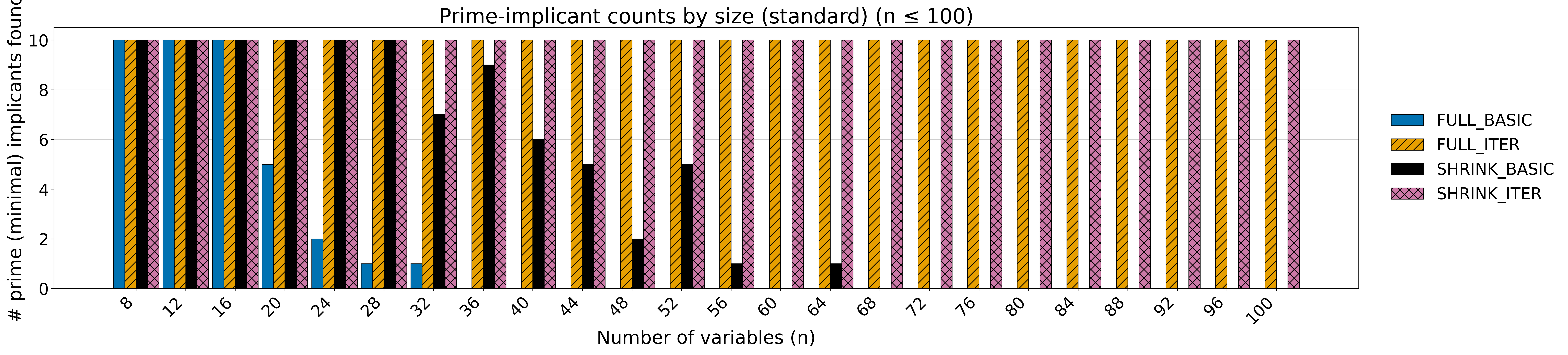}
    \label{fig:prime-standard}
  \end{minipage}
  \hfill
  \begin{minipage}{1\textwidth}
    \centering
    \includegraphics[width=0.85\linewidth]{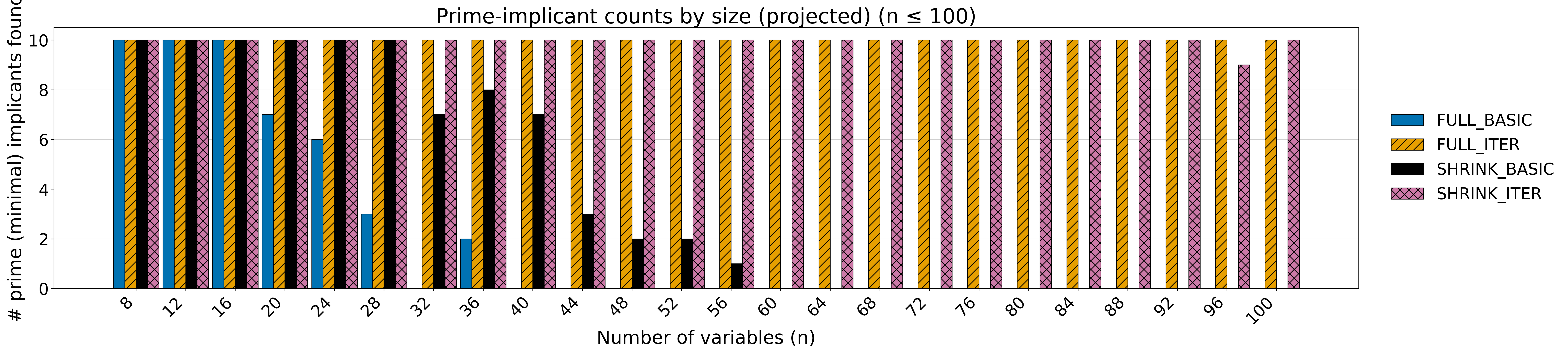}
    \caption{Prime (minimal) implicants found for $n\le 100$.
  \textbf{Top}: standard. \textbf{Bottom}: projected.}
    \label{fig:prime-projected}
  \end{minipage}
\end{figure*}

\begin{figure*}
\centering
    \includegraphics[width=0.65\linewidth]{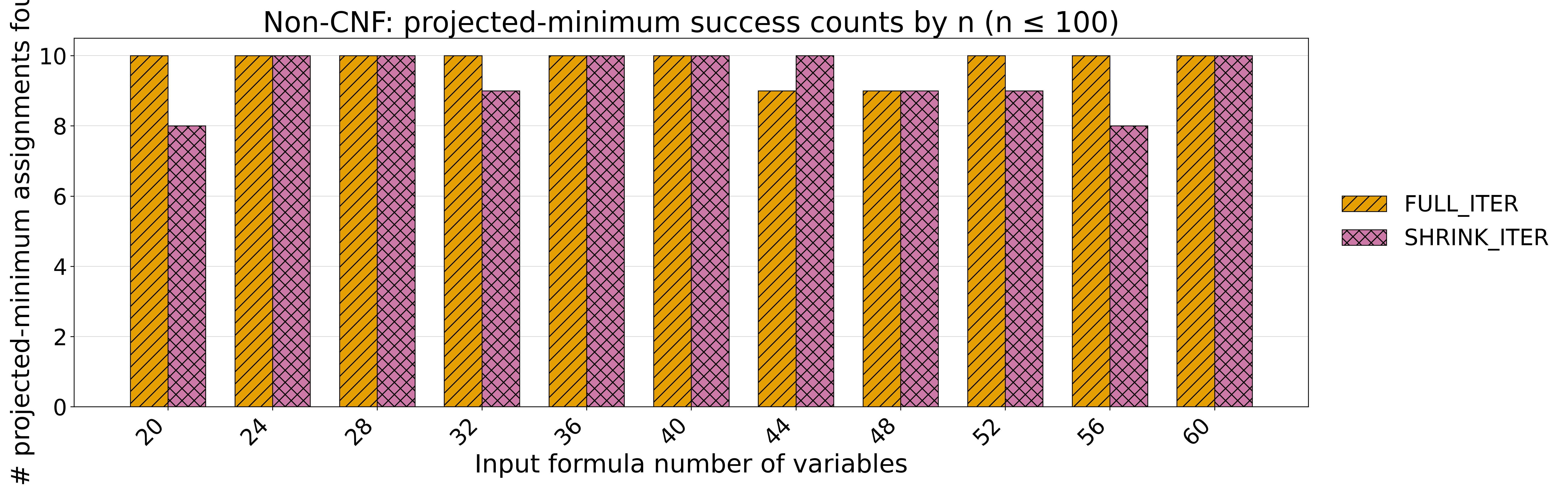}
    \caption{Projected minimum-cardinality assignments found for non-CNF formulae.}
    \label{fig:noncnf}
\end{figure*}



Because SA does not guarantee reaching a global optimum (as stated in Section~\ref{subsec:ising}), we explicitly post-check three properties of each returned solution: ($i$) implicant check, ($ii$) minimality check, and ($iii$) minimum-cardinality check. Figures~\ref{fig:min-projected} and~\ref{fig:prime-projected} report, for each $n \le 100$ (where exact minimum-cardinality verification is feasible), how many instances (out of $10$) satisfy these checks under each configuration.

First, all assignments retrieved under both \textsc{Basic} and \textsc{Iter}
satisfy the input formula and have energy below the $n\gamma$ threshold from
Corollary~\ref{cor:low-energy-feasible}, as intended. Baseline SA often recovers minimum implicants on small-to-medium instances, but the success rate decreases with $n$. \textsc{Iter} provides a modest but consistent improvement by further shrinking the best solution found in the initial run. The observed degradation should be interpreted as a search-quality effect, not an encoding-validity effect: there always exists a ground state decoded to a minimum-cardinality assignment, but baseline SA gets stuck to a local minimum instead.

In contrast, minimality is substantially easier to obtain, and iterative refinement markedly improves it. Once the polarity pattern is fixed, the refinement rounds act as a deletion phase, pushing additional variables into the don't-care state while preserving satisfiability, often eliminating redundant literals left by the initial run. On the 3-SAT instances, this improvement typically required only 1 or 2 refinement rounds.

We complement the random 3-SAT evaluation with the non-CNF benchmarks using \textsc{Iter}. Figure~\ref{fig:noncnf} reports projected QUBO minimization on the original variables $P$ after CNF conversion of the input problems. More refinement rounds are typically required to reach convergence (usually 3 to 5), plausibly due to the more complex induced energy landscape. Even with many auxiliary variables (about 65--70\% of all variables per instance), the method frequently recovers minimum-cardinality assignments on $P$. This behavior is consistent with Section~\ref{sec:projected}: for the NNF + Plaisted--Greenbaum pipeline, short assignments over the original variables admit a uniform single completion over the auxiliary variables.
Thus, even with a baseline QUBO optimizer, sampled solutions often decode directly to minimum-cardinality implicants, showing that the target optimum is already practically reachable.

\section{Conclusion and Future Work}
\label{sec:conclusion}

We introduce a new quadratic Ising/QUBO encoding of SAT aimed at computing short satisfying partial assignments. Each propositional variable is represented by a pair of polarity spins, inducing a three-valued semantics, where setting both polarities to $0$ corresponds to a don't-care state, and a sparsity term biases solutions toward small supports. We identified a penalty regime in which ground states correspond to satisfying assignments with minimum-cardinality support, and we showed that the same encoding can be used to shrink a given SAT model by fixing its polarity pattern, achieving minimum-cardinality implicants.

Two future work directions are most immediate. First, investigating alternative Ising approaches, such as quantum annealers and specialized Ising machines, may yield shorter assignments with higher probabilities. This also requires studies of hardware-aware variants that respect real-device topologies \cite{zbinden2020embedding}. Second, the encoding can be used as a minimization subroutine inside hybrid AllSAT workflows, replacing implicant-shrinking steps on harder instances or in recent network reliability encodings \cite{PAREDES2019106472}.

\bibliography{lipics-v2021-sample-article}

\newpage
\appendix

\section{Appendix}

\subsection{Minimum Implicant Projection Proof}
\label{appendix:theorem}

\begin{theorem}[Minimum-cardinality projected satisfying partial assignments]
\label{thm:projected-minimal-partial-models}
Let $F$ be a satisfiable CNF formula over variables $\mathcal V=P\cup H$, where
$P$ is the set of visible variables and $H$ the set of hidden variables. Assume there exists at least one projected implicant over $P$, and assume the weights satisfy
\[
\lambda > |P|\,\gamma
\qquad\text{and}\qquad
M > |P|\,\gamma.
\]
Assume moreover that the formula $F$ is structured such that for every partial assignment $\pi$ over $P$,
\[
\pi \models \exists H.\,F
\quad\Longleftrightarrow\quad
\exists \sigma \text{ over } H \text{ such that } \pi \wedge \sigma \models F.
\]
Then every ground state of the projected energy is ($i$) consistent, ($ii$) corresponds to a projected implicant of $F$ under $P$, and ($iii$) minimizes the number of active visible polarities among all projected implicants over $P$. Thus, under the above witness-completion assumption, the encoding reduces the computation of minimum-cardinality projected implicants to exact ground-state optimization.
\end{theorem}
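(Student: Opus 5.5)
The proof follows the template of Theorem~\ref{prop:minimal-partial-models}: find a feasible reference configuration whose energy is at most $|P|\gamma$, and use it to rule out clause violations and polarity inconsistencies in any ground state. The reference comes from the assumption that a projected implicant $\pi_0$ over $P$ exists. By the witness-completion hypothesis there is an assignment $\sigma_0$ over $H$ with $\pi_0\wedge\sigma_0\models F$. Encode $\pi_0\wedge\sigma_0$ in polarity spins, leaving visible variables outside $vars(\pi_0)$ at $(0,0)$, and pick the auxiliaries $c_k$ optimally. Call this configuration $\tilde{\mathbf z}$.

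To see that $\tilde{\mathbf z}$ has zero clause penalty, I would use the fact that a partial assignment entails the CNF $F$ exactly when it contains a literal of every clause; here I rely on the preprocessing assumption that there are no tautological clauses. So every clause has an active polarity spin, and every $E_{\mathrm{clause}}(k)$ can be made $0$. The configuration is consistent, and hidden variables pay no sparsity cost, so $E^{\mathrm{proj}}_F(\tilde{\mathbf z})=\gamma|\pi_0|\le|P|\gamma$.

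Steps (i) and (ii) then go as in Theorem~\ref{prop:minimal-partial-models}. Take a ground state $\mathbf z^\star$. If it is inconsistent, its energy is at least $M>|P|\gamma$. If some clause has all its polarity spins at $0$, the clause term is at least $\lambda>|P|\gamma$ after minimizing over $c_k$; Table~\ref{tab:and_gadget} and Table~\ref{tab:sat_term} confirm this. Either case contradicts minimality against $\tilde{\mathbf z}$. Hence $\mathbf z^\star$ is consistent and every clause contains an active literal.

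The decoded assignment $\mu^\star=\pi_{\mu^\star}\wedge\sigma^\star$ therefore hits every clause, so $\mu^\star\models F$. It follows that $\pi_{\mu^\star}\models\exists H.F$: any completion $\rho$ of the visible variables is extended by $\sigma^\star$ together with an arbitrary completion of the unassigned hidden variables. So $\pi_{\mu^\star}$ is a projected implicant, and $E^{\mathrm{proj}}_F(\mathbf z^\star)=\gamma|\pi_{\mu^\star}|$.

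Step (iii), minimality of the number of active visible polarities, is the main obstacle, and it is where the witness-completion assumption does the real work. Suppose some projected implicant $\pi'$ has $|\pi'|<|\pi_{\mu^\star}|$. The $\Rightarrow$ direction of the hypothesis gives a $\sigma'$ over $H$ with $\pi'\wedge\sigma'\models F$. I would then argue, as for $\tilde{\mathbf z}$, that the polarity encoding of $\pi'\wedge\sigma'$ is consistent, has zero clause penalty, and has energy exactly $\gamma|\pi'|<\gamma|\pi_{\mu^\star}|$, which contradicts that $\mathbf z^\star$ is a ground state. The delicate points are the step from entailment to clause hitting, which needs the no-tautology assumption, and making sure sparsity counts only visible spins, so that the hidden witness $\sigma'$ costs nothing. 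Without the hypothesis, Example~\ref{ex:xor-separation} shows that no uniform $\sigma'$ need exist, so this is precisely where it is indispensable.
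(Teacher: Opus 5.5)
Your proof is correct and follows the paper's argument essentially step for step. A witness-completed reference configuration of energy at most $|P|\gamma$ rules out inconsistent and clause-violating ground states, and the $\Rightarrow$ direction of the witness-completion hypothesis then supplies a cheaper zero-penalty configuration in step (iii). Where you derive the projected-implicant property directly from clause hitting, the paper uses the equivalent contrapositive of the trivial direction; your explicit argument that entailing the CNF means hitting every clause (using the no-tautology preprocessing) fills in a step the paper leaves implicit.
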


\begin{proof}
Let $E^{\mathrm{proj}}_F$ be the projected energy function, where the sparsity term counts only active polarity spins on the visible variables $P$.
Let $\mathbf{z}^\star$ be a ground state (minimum-energy spin configuration) of $E^{\mathrm{proj}}_F$, and let $\pi^\star$ be the induced partial assignment over $P$ obtained from $\mathbf{z}^\star$ by reverting Equation~\ref{eq:semantic} on the visible variables.

By assumption, fix an arbitrary projected implicant $\tilde{\pi}$ over $P$, and by assumption let $\tilde{\sigma}$ be a hidden assignment over $H$ such that
\[
\tilde{\pi}\wedge \tilde{\sigma} \models F.
\]
Let $\tilde{\mathbf{z}}$ be the corresponding consistent polarity-spin encoding. For this configuration, all clause penalties can be set to $0$, hence
\[
E^{\mathrm{proj}}_F(\tilde{\mathbf{z}})= \gamma\cdot s_P(\tilde{\mathbf{z}}),
\]
where $s_P(\mathbf{z})$ denotes the number of active \emph{visible} polarity spins in $\mathbf{z}$.

\textbf{(i) Clause satisfaction / projected implication.}
Assume for contradiction that $\pi^\star$ is not a projected implicant of $F$.
Then
\[
\pi^\star \not\models \exists H.\,F.
\]
Hence, there exists no assignment $\sigma$ over $H$ such that
\[
\pi^\star \wedge \sigma \models F.
\]
Equivalently, no assignment to the hidden variables completes $\pi^\star$ to an ordinary implicant of $F$. By the semantics of the clause-penalty construction, this means that no full consistent spin configuration realizing $\pi^\star$ can make all clause penalties vanish. Therefore, any such configuration incurs at least one clause penalty, and hence its energy is at least $\lambda$. Thus
\[
E^{\mathrm{proj}}_F(\mathbf{z}^\star)\;\ge\;\lambda.
\]
On the other hand, for the witness-completed projected implicant $\tilde{\pi}\wedge \tilde{\sigma}$ we have
\[
E^{\mathrm{proj}}_F(\tilde{\mathbf{z}})=\gamma\cdot s_P(\tilde{\mathbf{z}})\le |P|\gamma.
\]
Since $\lambda>|P|\gamma$, then:
\[
E^{\mathrm{proj}}_F(\mathbf{z}^\star)\;>\;E^{\mathrm{proj}}_F(\tilde{\mathbf{z}}),
\]
contradicting that $\mathbf{z}^\star$ is a ground state. Hence, every ground state corresponds on $P$ to a projected implicant of $F$.

\textbf{(ii) Polarity consistency.}
Assume for contradiction that $\mathbf{z}^\star$ is inconsistent, i.e., there
exists a variable $A_i$ with both polarity spins active. Then the semantic consistency term contributes $M$, so
\[
E^{\mathrm{proj}}_F(\mathbf{z}^\star)\;\ge\;M.
\]
Using $M>|P|\gamma$ and again
\[
E^{\mathrm{proj}}_F(\tilde{\mathbf{z}})\le |P|\gamma,
\]
we derive
\[
E^{\mathrm{proj}}_F(\mathbf{z}^\star)\;>\;E^{\mathrm{proj}}_F(\tilde{\mathbf{z}}),
\]
contradicting that $\mathbf{z}^\star$ is a ground state. Therefore, every ground state is polarity-consistent.

\textbf{(iii) Minimum-cardinality among projected implicants.}
By (i) and (ii), $\mathbf{z}^\star$ is consistent, and its visible part $\pi^\star$ is a projected implicant of $F$. Hence, all clause penalties and semantic consistency penalties vanish, and
\[
E^{\mathrm{proj}}_F(\mathbf{z}^\star)=\gamma\cdot s_P(\mathbf{z}^\star),
\]
where $s_P(\mathbf{z})$ is the number of active visible polarity spins in $\mathbf{z}$.

Assume for contradiction that there exists another projected implicant $\pi'$ over $P$ with strictly fewer active visible polarities than $\pi^\star$.
By the witness-completion assumption, there exists an assignment $\sigma'$ over $H$ such that
\[
\pi' \wedge \sigma' \models F.
\]
Let $\mathbf{z}'$ be the corresponding consistent polarity-spin encoding. Then all clause penalties and consistency penalties vanish, so
\[
E^{\mathrm{proj}}_F(\mathbf{z}')=\gamma\cdot s_P(\mathbf{z}').
\]
Since $\pi'$ has strictly fewer active visible polarities than $\pi^\star$, we obtain
\[
s_P(\mathbf{z}')<s_P(\mathbf{z}^\star),
\]
and therefore
\[
E^{\mathrm{proj}}_F(\mathbf{z}')<E^{\mathrm{proj}}_F(\mathbf{z}^\star),
\]
contradicting that $\mathbf{z}^\star$ is a ground state.

Thus $\mathbf{z}^\star$ minimizes the number of active visible polarities among all projected implicants over $P$, i.e., it corresponds to a minimum-cardinality projected satisfying partial assignment.
\end{proof}

\subsection{Example of the Iterative Shrinking Procedure}
\label{app:iterative-shrinking-example}

We illustrate the iterative shrinking procedure on the formula
\[
F = (x_1 \lor \neg x_2 \lor x_3),
\]
starting from the total satisfying assignment
\[
\eta^{(0)} = x_1 \wedge \neg x_2 \wedge x_3 .
\]
In the polarity-splitting representation, $\eta^{(0)}$ is encoded as
\[
(p_1,n_1)=(1,0),\qquad
(p_2,n_2)=(0,1),\qquad
(p_3,n_3)=(1,0).
\]

To shrink $\eta^{(0)}$, we restrict the QUBO so that only literals consistent
with $\eta^{(0)}$ can be kept. Thus, the falsified polarities are fixed to zero:
\[
n_1=0,\qquad p_2=0,\qquad n_3=0,
\]
while the satisfied polarities $p_1, n_2, p_3$ remain free. Setting one of these spins to $1$ keeps the corresponding literal; setting it to $0$ drops the literal, leaving the variable unassigned.

Suppose that the first run of the Ising/QUBO optimizer returns
\[
p_1=1,\qquad n_2=1,\qquad p_3=0.
\]
This decodes to the partial assignment
\[
\eta^{(1)} = x_1 \wedge \neg x_2 .
\]
The literal $x_3$ has been removed, since $(p_3,n_3)=(0,0)$ now represents that
$x_3$ is unassigned. The assignment $\eta^{(1)}$ still satisfies $F$, because the
clause is already satisfied by $x_1$ and by $\neg x_2$, and has a lower energy than the initial total assignment.

Since the optimizer is heuristic, this first result is not necessarily a fixpoint. We therefore run the shrinking step again, now starting from
\[
\eta^{(1)} = x_1 \wedge \neg x_2 .
\]
The variable $x_3$ remains unassigned by fixing both $p_3$ and $n_3$ to 0, and the remaining compatible polarities are again allowed to be switched off:
\[
n_1=0,\qquad p_2=0,
\qquad
p_1,n_2\in\{0,1\}.
\]
A second run will be more prone to return a shorter assignment, e.g.: 
\[
p_1=1,\qquad n_2=0,
\]
which decodes to
\[
\eta^{(2)} = x_1 .
\]

This example shows how iterative shrinking works in the QUBO encoding. Polarity
spins are progressively switched from $1$ to $0$, thereby turning fixed literals
into unassigned variables. At each iteration, the optimizer may keep pruning
active polarities; however, pruning too aggressively would falsify some clauses
and incur a large increase in the objective value due to the clause penalty
term. Since Ising annealers' goal is to minimize the energy function, in proper weight regimes, we are guaranteed to move to shorter and shorter partial assignments.

Notice that this behavior is effective for encodings that explicitly represent the
unassigned state. In a standard SAT-to-Ising/QUBO encoding, each Boolean
variable is represented by a single spin, e.g., $z_i=1$ for True and $z_i=0$
for False. Hence, every optimizer output is necessarily a total assignment, and re-running the optimizer only produces the same total satisfying assignments, not a smaller implicant.

For example, the formula $F$ could be encoded by a penalty such as
\[
(1-z_1)z_2(1-z_3),
\]
which is zero whenever the clause is satisfied. If $z_1=1$, the values of $z_2$ and $z_3$ are apparently irrelevant to the satisfaction of the clause. However, since the annealer must assign a value to every spin, the resulting Boolean assignment is necessarily total: in particular, concrete values are still returned for $z_2$ and $z_3$. Thus, although the formula has logical don't-cares, the encoding does not represent them as unassigned variables. Adding a sparsity penalty on the single-spin encoding does not solve this issue: it can only bias variables toward one Boolean value, which differs from setting a variable to ``unassigned''. Consequently, an iterative shrinking procedure has no partial state to preserve across rounds: re-optimizing can only produce another total Boolean assignment, not progressively delete literals. This is why repeated optimization can shrink assignments in our polarity-splitting encoding, but not in standard two-valued encodings.

\end{document}